\theoremstyle{plain}
\newtheorem{thm}{Theorem}
\newtheorem{exmp}{Example}
\definecolor{myurlcolor}{rgb}{0,0,0.7}
\begin{document}
	\title{Characterizing the generalized complementarity polytope with extractable information from MUBs}
	\author{Gautam Sharma}
	\email{gautam.oct@gmail.com}
	\affiliation{Optics and Quantum Information Group, Institute of Mathematical Sciences,\\HBNI, CIT Campus, Taramani, Chennai 600113, India}

\begin{abstract}
	Complementarity polytope is a geometric structure that exists in $N^2-1$ dimensional space for an $N$ dimensional Hilbert space. The existence of $N+1$ mutually unbiased bases(MUBs) is possible, if such a polytope can be shown to be a subset of density matrices, which is a very difficult task. With the hope of simplifying this task, we have shown in this work that, the complementarity polytope can be characterized by the total extractable information from $N+1$ MUBs. We also demonstrate that $t\leq N+1$ number of MUBs also form a polytope that exists in $t(N-1)$ dimensional space, which we refer to as ``generalized complementarity polytope". The generalized complementarity polytope can also be characterized by total extractable information from $t$ MUBs.
\end{abstract}

\maketitle
\section{Introduction}
Quantum systems can have properties which are complementary to each other, i.e., if we know completely about one property, then we have zero knowledge about it's complementary property. Mathematically, this characteristic is represented via the mutual unbiasedness of the bases of complementary observables. Two orthonormal bases $A=\{\ket{a_1},...,\ket{a_n}\}$ and $B=\{\ket{b_1},...,\ket{b_n}\}$ in an $N$ dimensional hilbert state $\mathcal{H}_N$ are said to be mutually unbiased if 
\begin{align*}
\forall k,l	\hspace*{1cm}|\braket{a_k}{b_l}|^2=\frac{1}{N}.
\end{align*}

It is known that a maximum of $N+1$ MUBs can exist in a Hilbert space of dimension $N$ \cite{bengtsson2007three}. However, only when $N$ is a prime or power of prime number $n=p^k$,  $N+1$ such bases have been found \cite{wootters1989optimal,ivonovic1981geometrical}. In fact, it is a well studied topic and the properties of MUBs for prime powered dimensions are well understood \cite{brierley2009all,PhysRevA.70.012302,PhysRevA.72.062310,combescure2009block,wiesniak2011entanglement,mcnulty2016mutually}. For cases, when $N$ is not a prime powered number the maximum number of MUBs is not known \cite{wocjan2005new,durt2010mutually,alber2018mutually,horodecki2020five}. In fact, for the simplest composite dimension of $N=6$, solutions containing a maximum of only three bases have been found \cite{bengtsson2007mutually,butterley2007numerical,PhysRevA.79.052316,grassl2004sic,jaming2009generalized,jaming2010problem,PhysRevA.83.062303,goyeneche2013mutually,batle2013new,chen2018mutually,mcnulty2012all,mcnulty2012impossibility}.

Among many efforts to find MUBs in dimensions which are not power of prime, a geometric object called the complementary polytope was discovered in \cite{bengtsson2005mutually}. A complementarity polytope exists in an $N^2-1$ dimensional space for all dimensions regardless of whether $N$ is power of prime or not. However for $N+1$ MUBs to exist it is necessary that the complementarity can be accommodated inside the polytope of density matrices. However, using this approach also, it was not possible to pinpoint on any peculiarity which will solve this problem.

In this work, we have introduced a new measure  of information and use it to calculate extractable information from a set of $N+1$ MUBs. We show that this total information characterizes the complementarity polytope. We further show the existence of generalized complementarity polytopes for $t\leq N+1$ number of MUBs and characterize it using total extractable information from $t$ MUBs. Through the generalized polytopes it could be easier to show the existence or non-existence of $t \leq N+1$ MUBs.

\section{Absolute Measure of Information and Uncertainty}
In quantum theory, the measurement of an observable on a quantum system is described in terms of probabilities of the possible outcomes. Any such probability distribution $\{p_n\}_{n=1}^N$ of the $N$ possible outcomes is least informative whenever it is a maximally uncertain distribution, i.e., $\forall n, p_n=\frac{1}{N}$. Also, $\{p_n\}_{n=1}^N$ is most informative when it is a maximally certain probability distribution. Therefore, we can define the information content of a probability distribution as $\mathbb{I}_p=1-\mathbb{U}_p$, where $\mathbb{U}_p$ can be any proper measure of uncertainty. The uncertainty measures can be constructed using Schur-concave functions of probability distributions \cite{PhysRevLett.111.230401}. In our case we use the following measure of uncertainty for the probabability distribution $\{p_n\}_{n=1}^N$ 
\begin{align*}
\mathbb{U}=1-\frac{N}{2(N-1)}\sum_{n=1}^{N}\bigg|p_n-\frac{1}{N}\bigg|,
\end{align*}
where $\frac{N}{2(N-1)}$ is the normalization so that $0\leq\mathbb{U}\leq1$. Using this measure of uncertainty we define it's absolute information content as

\begin{align}\label{inf}
\mathbb{I}=\frac{N}{2(N-1)}\sum_{n=1}^{N}\bigg|p_n-\frac{1}{N}\bigg|.
\end{align}
It is straightforward to note that $0\leq \mathbb{I}\leq 1$. We highlight here that, this measure of information has a very similar structure to how the Brukner and Zeilinger(BZ) invariant information  \cite{PhysRevLett.83.3354} was constructed. The only difference being that we are taking the absolute value of $p_n-\frac{1}{N}$, instead of squaring it. One can also define information using any other measure of uncertainty also \cite{hardy1952inequalities}, but as we shall show that only this form is useful for our purpose. 

\section{Complementarity polytope from Information theoretic perspective}
That, a complete set of MUBs in $N$ dimensions defines a complementarity polytope in $N^2-1$ dimensions, was originally shown in \cite{bengtsson2005mutually}. It was also concluded that the existence of $N+1$ set of MUBs is dependent on the fact whether the complementarity polytope can be made to form a subset of density matrices. Here, we show that the total extractable information from MUBs characterizes the complementarity polytope. To demonstrare this, first we briefly present the structure of states in $N^2-1$ dimensions \cite{bengtsson2005mutually}.

We look for how the pure MUB states sit in the space of density matrices in $N^2-1$ dimensions. To do that, we first define the squared distance between two matrices $\rho_1$ and $\rho_2$ as 
\begin{align}\label{distance}
	D^2(\rho_1,\rho_2)=\frac{1}{2}\Tr(\rho_1-\rho_2)^2.
\end{align}
which ensures that the states occupy an $N^2-1$ dimensional space. In the $N^2-1$ dimensional space the matrices are represented with vectors, such that the origin lies at the matrix $\rho_*=\frac{\mathbb{I}}{N}$ with the scalar product given as
\begin{align}\label{scalarproduct}
	(\rho_1,\rho_2)=\frac{1}{4}[D^2(\rho_1+\rho_2,\rho_*)-D^2(\rho_1-\rho_2,\rho_*)]=\frac{1}{2}\bigg[\Tr\rho_1\rho_2-\frac{1}{N}\bigg].
\end{align}

Let's assume that $\mathcal{O}_m=\{\Pi_{nm}\}_{n=1}^N$ and $\mathcal{O}_{m'}=\{\Pi_{nm'}\}_{n=1}^N$ with $m,m'\in \{1,...,N+1\}$, are two orthonormal bases which are mutually unbiased with each other. Note that in the $N^2-1$ dimensional space, the projectors $\Pi_{nm}$ need not always represent quantum states, they can be matrices lying on the boundary of the sphere in $N^2-1$ dimensions. The projectors of these MUBs satisfy the following relations 
\begin{align}
&\Tr (\Pi_{nm})^2=1,  \label{projector}\\
&\Tr \Pi_{nm}{\Pi}_{n'm}=0,  \hspace{1cm}   n\neq{n'}, \label{orthogonality} \\
&\Tr \Pi_{nm}{\Pi}_{nm'}=\frac{1}{d},  \hspace{1cm}   m\neq{m'}. \label{mutualunbiasedness}
\end{align}

From Eq.(\ref{projector}), it follows that all the pure states are located on the surface of a sphere of radius $\sqrt{\frac{N-1}{2N}}$ centered at $\rho_*$. This sphere forms the outsphere of the convex set of density matrices. Further Eq.(\ref{distance}) along with Eq.(\ref{orthogonality}) implies that two different projectors from an orthonormal basis are at unit distance from each other. Thus, all the states belonging to an orthonormal basis $\mathcal{O}_m=\{\Pi_{nm}\}_{n=1}^N$, form a regular simplex in an $N-1$ dimensional space. Finally, on supplying Eq.(\ref{mutualunbiasedness}) in Eq.(\ref{scalarproduct}), tells us that the vectors representing mutually unbiased vectors are orthogonal with each other, implying that two different MUBs lie in orthogonal subspaces.

Therefore, the $N^2-1$ dimensional space can be divided into $N+1$ number of $N-1$ dimensional spaces which implies that we can have a maximum of $N+1$ MUBs for $N$ dimensional Hilbert space. In this way we get $N(N+1)$ number of points forming a convex polytope in a $N^2-1$ dimensional space. 

If we restrict ourselves to prime or prime power dimensions, i.e., $N=p^k$, where $p$ is a prime number and $k$ is an integer, then it is known that there exist $N+1$ MUBs. But the same can't be said about Hilber spaces with composite dimensions. Despite all this, a complementarity polytope exists irrespective of whether $N$ is prime or non-prime.

To calculate the total extractalbe information from MUBs for a given state $\rho$, we note that each MUB ${\mathcal{O}_m}$  is associated with the probability distribution $\{p_{nm}\}_{n=1}^{N}$ with information content given by (from Eq.(\ref{inf}))
\begin{align*}
	\mathbb{I}_m=\frac{N}{2(N-1)}\sum_{n=1}^{N}\bigg|p_{nm}-\frac{1}{N}\bigg|.
\end{align*}
Further, as $\mathcal{O}_m$ and $\mathcal{O}_{m'}$ are mutually unbiased, they capture completely independent information about $\rho$, i.e., $\mathbb{I}_m$ and $\mathbb{I}_{m'}$ are independent. Then, all the MUBs together can capture total information of $\rho$, which is given by
\begin{align}\label{inftotal}
\mathbb{I}_{total}&=\sum_{j=1}^{N+1}\mathbb{I}_m,\nonumber \\
&=\frac{N}{2(N-1)}\sum_{m=1}^{N+1}\sum_{n=1}^{N}\bigg|p_{nm}-\frac{1}{N}\bigg|.
\end{align}

Now, we present the main result of our work.

\begin{thm}
	{The states on the surface of complementarity polytope are characterized by $\mathbb{I}_{total}=1$. Where as the states inside and outside the complementarity polytope are identified via $\mathbb{I}_{total}<1$ and $\mathbb{I}_{total}>1$ respectively.}\label{mainresult} \newline
\end{thm}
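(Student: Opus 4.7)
The plan is to show that $\mathbb{I}_{total}$ is the gauge function (Minkowski functional) of the complementarity polytope with respect to the interior point $\rho_*$, from which the three cases of the theorem follow immediately. The starting observation is that, by Eq.~(\ref{scalarproduct}), $p_{nm}-1/N = \Tr((\rho-\rho_*)\Pi_{nm})$ is a linear functional of $v:=\rho-\rho_*$, so
\begin{align*}
\mathbb{I}_{total}(\rho) = \frac{N}{2(N-1)}\sum_{n,m}\bigl|\Tr(v\Pi_{nm})\bigr|
\end{align*}
is convex and positively homogeneous of degree one in $v$. By this homogeneity, once I verify $\mathbb{I}_{total}=1$ exactly on the polytope's boundary, the three cases $\mathbb{I}_{total}<1$, $=1$, $>1$ correspond exactly to interior, boundary, and exterior along each ray from $\rho_*$.

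For the $\mathbb{I}_{total}\le 1$ half, I would expand a polytope point as a convex combination $\rho=\sum_{n,m}\lambda_{nm}\Pi_{nm}$, $\lambda_{nm}\ge 0$, $\sum\lambda_{nm}=1$, and write $\Lambda_m:=\sum_n\lambda_{nm}$. Substituting into $p_{nm}=\Tr(\rho\Pi_{nm})$ and applying the MUB relations Eqs.~(\ref{orthogonality})--(\ref{mutualunbiasedness}) collapses the sum to the clean identity
\begin{align*}
p_{nm}-\frac{1}{N}=\lambda_{nm}-\frac{\Lambda_m}{N}.
\end{align*}
For each $m$ with $\Lambda_m>0$, $q_{nm}:=\lambda_{nm}/\Lambda_m$ is a probability distribution on $n$, and the elementary bound $\sum_n|q_{nm}-1/N|\le 2(N-1)/N$ (with equality iff $q_{nm}$ is concentrated on a single index), multiplied by $\Lambda_m$ and summed over $m$, yields $\sum_{n,m}|p_{nm}-1/N|\le 2(N-1)/N$. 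That is, $\mathbb{I}_{total}(\rho)\le 1$, with equality exactly when the convex decomposition is supported on at most one projector per basis.

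The exterior case $\mathbb{I}_{total}>1$ then follows from positive homogeneity: for $\rho$ outside the polytope, the ray from $\rho_*$ through $\rho$ crosses the boundary at a point $\rho_\partial$ with $\mathbb{I}_{total}(\rho_\partial)=1$, and hence $\mathbb{I}_{total}(\rho)>1$ by scaling. The interior case is symmetric. The main obstacle is extending the equality analysis to every point of the polytope boundary, not only to the one-vertex-per-basis combinations that appear directly from the equality check. My plan for this gap is to argue that on every facet of the polytope the signs of the quantities $p_{nm}-1/N$ are constant; then $\mathbb{I}_{total}$ becomes an affine functional of $\rho$ on that facet, and because it equals $1$ at every vertex of the facet, it must equal $1$ throughout. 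Establishing this facet-by-facet sign constancy from the MUB geometry is where I expect the argument to require the most care.
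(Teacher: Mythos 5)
Your preparatory steps are correct and in fact sharper than the paper's own argument: the identity $p_{nm}-\frac{1}{N}=\lambda_{nm}-\frac{\Lambda_m}{N}$ follows cleanly from Eqs.~(\ref{projector})--(\ref{mutualunbiasedness}), and your bound $\sum_n|q_{nm}-\frac{1}{N}|\le \frac{2(N-1)}{N}$, with equality if and only if $q_{\cdot m}$ is concentrated on a single index, is exactly right. But the step you flag as the remaining gap --- establishing $\mathbb{I}_{total}=1$ on the \emph{whole} boundary via sign-constancy of $p_{nm}-\frac{1}{N}$ on each facet --- cannot be completed, and your own equality analysis is the reason. A facet of the complementarity polytope is the convex hull of $N-1$ vertices from each basis (one vertex per basis deleted), so for $N\ge 3$ a generic boundary point carries two or more nonzero weights in some basis, and by your strict-inequality case $\mathbb{I}_{total}<1$ there. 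Concretely, for $N=3$ take $\rho=\frac{1}{2}(\Pi_{11}+\Pi_{21})$: this satisfies the paper's own description of a surface point (one coefficient per basis vanishes, the weights sum to $1$), yet $p_{n1}-\frac{1}{3}=(\frac{1}{6},\frac{1}{6},-\frac{1}{3})$ and $p_{nm}=\frac{1}{3}$ for $m\ne 1$, giving $\mathbb{I}_{total}=\frac{3}{4}\cdot\frac{2}{3}=\frac{1}{2}$. The signs of $p_{n1}-\frac{1}{3}$ also visibly flip between the two endpoints of that edge, so the proposed facet-sign-constancy lemma is false. What you actually proved is that $\{\mathbb{I}_{total}\le 1\}$ is a convex body containing the polytope, with equality attained exactly on convex combinations using at most one projector per basis; for $N=2$ that set happens to be the entire boundary of the octahedron, which is why the picture in Fig.~\ref{octasquare} works, but for $N\ge 3$ it is a proper subset of the boundary.

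You should also be aware that the paper's proof has the same hole, only hidden: Eq.~(\ref{infofrommub}) evaluates the contribution of each $a_{ns}$ as $|\frac{a_{ns}(N-1)}{N}|$ plus $N-1$ copies of $|\frac{a_{ns}}{N}|$, i.e., it treats each coefficient as if it were the only nonzero one in its basis. That is precisely your equality case, and it fails on most of the boundary when $N\ge 3$. So the discrepancy you noticed is not a gap in your write-up waiting to be patched; it is a counterexample to the theorem as stated for $N\ge 3$. A correct statement requires either restricting ``surface'' to the join of single vertices from each basis (exactly where your equality condition holds), or replacing $\mathbb{I}_m$ by the true gauge function of the simplex $\mathcal{O}_m$, namely $-N\min_n\big(p_{nm}-\frac{1}{N}\big)$, whose sum over $m$ really is the Minkowski functional of the polytope with respect to $\rho_*$; for $N=2$ this coincides with $\mathbb{I}_m$, which explains why the two notions were conflated.
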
 
\begin{proof}
	A state on the surface of the complementarity polytope can be written as a convex hull of projectors $\Pi_{nm}$ of the MUBs, such that at least one projector from each MUB is not included. A state on this surface is represented by $\mathcal{\rho}_{surface}=\sum_{m=1}^{N+1}\sum_{n=1}^{N}a_{nm}\Pi_{nm}$, so that $\sum_{m=1}^{N+1}\sum_{n=1}^{N}a_{nm}=1$. In this summation, atleast one $a_{nm}$ is zero for each MUB $\mathcal{O}_m$. For this state the information content from probability distribution of an MUB $\mathcal{O}_{m=s}$ is given by
	\begin{align}\label{infofrommub}
	\mathbb{I}_s=\sum_{n=1}^{N}\frac{N}{2(N-1)}\bigg(\big|\frac{a_{ns}(N-1)}{N}\big|+(N-1)\big|\frac{a_{ns}}{N}\big|\bigg)=\sum_{n=1}^{N}a_{ns}.
	\end{align}
	and therefore we have the total information from MUBs on the surface of the complementarity polytope as 
	\begin{align}\label{totalinfo}
	\mathbb{I}_{total}=\sum_{s=1}^{N+1}\mathbb{I}_s=\sum_{s=1}^{N+1}\sum_{n=1}^{d}a_{ns}=1.
	\end{align}
	
	Thus, on the surface of the complementarity polytope the total information content from the MUBs is unity everywhere. Any point above the surface outside the polytope can also be represented as a sum of $\rho_{out}=\sum_{m=1}^{N+1}\sum_{n=1}^{N}a_{nm}\Pi_{nm}$, with $\sum_{m=1}^{N+1}\sum_{n=1}^{N}a_{nm}>1$. Hence, the total extractable information from MUBs, for any state outside the complementarity polytope is greater than unity, i.e. $\mathbb{I}_{total}>1$. Using a similar argument, for a point below this surface and inside the polytope will have $\mathbb{I}_{total}<1$. This concludes the proof.
\end{proof}

This feature of constant value of $\mathbb{I}_{total}$ on the surface of complementarity polytope, has a stark similarity with the BZ invariant information, which has constant value on the surface of the $N^2-1$ dimensional sphere, which can be seen from the form of BZ information and Eq.(\ref{projector})
\begin{align*}
	\mathbb{I}_{BZ}=\sum_{i=1}^{N+1}\bigg(p_i-\frac{1}{N}\bigg)^2=2\Tr\rho^2-\frac{1}{N}.
\end{align*}
In fact, as we take different values of $\alpha$ in $\bigg|p_i-\frac{1}{N}\bigg|^{\alpha}$, we can get a different surface with constant total information from MUBs.  It should also be noted that the $\mathbb{I}_{total}$ from Eq.(\ref{inftotal}) is not invariant with respect to the choice of complementarity set of bases, unlike the BZ information. Thus, for each choice of set of complementarity bases, $\mathbb{I}_{total}$ identifies a different complementarity polytope.

Coming to the information theoretic characterization of the complementary polytope via total extractable information from MUBs. The total information from MUBs is constant on the surface of complementarity polytope is always true irrespective of whether the dimension $N$ is prime powered or not. It remains unclear how this information can help to verify the existence or non-existence of $N+1$ MUBs.

\section{Generalized complementarity polytopes}

Now, we will show that for Hilbert space of dimension $N$, even less than $N+1$ number of MUBs form a polytope which we refer to as the ``Generalized complementarity polytope"(GCP). We keep the setting from the previous section the same, i.e., the notion of distance(Eq.\ref{distance}) and scalar product(Eq.\ref{scalarproduct}) are unchanged. The only different thing we do is to consider only $t$ number of MUBs with $t\leq N+1$, so that the $t$ MUBs $\mathcal{O}_m$ with $m\in \{1,...,t\}$, occupy a $t(N-1)$ dimensional space. An analysis similar to previous section follows. 

Each orthonormal basis $\mathcal{O}_m=\{\Pi_{nm}\}_{n=1}^N$ forms a regular simplex consuming $N-1$ number of dimensions. Also, two different MUBs lie in orthogonal spaces, and therefore the $t$ MUBs lie in the $t(N-1)$ dimensional space. In this way we have $tN$ number points from $t$ MUBs forming a convex polytope. It should be noted that all the generalized polytopes with  $t\leq N+1$, always have its vertices on the surface of the $N^2-1$ dimensional sphere.

Not only that, the generalized polytopes can also be characterized with an information measure, however unlike Theorem.\ref{mainresult}, this time the information measure is the total information extractable with only $t$ MUBs, which is defined as
\begin{align}\label{infgen}
\mathbb{I}_{t-total}&=\sum_{m=1}^{t}\mathbb{I}_m,\nonumber \\
&=\frac{N}{2(N-1)}\sum_{m=1}^{t\leq N+1}\sum_{n=1}^{N}\bigg|p_{nm}-\frac{1}{N}\bigg|.
\end{align}

\begin{thm}\label{mainresult2}
	{The states on the surface of generalized complementarity polytopes are characterized by $\mathbb{I}_{t-total}=1$. While the states inside and outside the complementarity polytope are identified via $\mathbb{I}_{t-total}<1$ and $\mathbb{I}_{t-total}>1$ respectively.}
\end{thm}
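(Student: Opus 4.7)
The plan is to reproduce the proof of Theorem \ref{mainresult} almost verbatim, with the outer sum now ranging over $t$ bases instead of $N+1$, and to check that every algebraic identity used there survives the restriction. First I would parametrize a state on the surface of the GCP as $\rho_{surface}=\sum_{m=1}^{t}\sum_{n=1}^{N}a_{nm}\Pi_{nm}$ with $a_{nm}\geq 0$, $\sum_{m,n}a_{nm}=1$, and at least one $a_{nm}=0$ in each of the $t$ MUBs. This is the natural surface characterization inherited from Theorem \ref{mainresult}, now interpreted inside the $t(N-1)$-dimensional subspace spanned by the chosen bases.

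The core step is the single-basis calculation. For fixed $s\in\{1,\ldots,t\}$ I would compute $p_{ns}=\Tr(\rho_{surface}\Pi_{ns})$ using Eqs.~(\ref{projector})--(\ref{mutualunbiasedness}): orthogonality within $\mathcal{O}_s$ extracts the diagonal contribution $a_{ns}$, while mutual unbiasedness with the other $t-1$ bases contributes $\frac{1}{N}\sum_{m\neq s}A_m=\frac{1}{N}(1-A_s)$, where $A_s=\sum_n a_{ns}$. The resulting expression $p_{ns}-\tfrac{1}{N}=a_{ns}-A_s/N$ is identical to the one appearing in Theorem \ref{mainresult}, so the manipulation in Eq.~(\ref{infofrommub}) goes through unchanged and gives $\mathbb{I}_s=\sum_n a_{ns}$. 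Summing over $s=1,\ldots,t$ yields $\mathbb{I}_{t-total}=\sum_{s,n}a_{ns}=1$. For the in/out dichotomy I would reuse the original argument: a state just outside the surface can still be written in terms of the same $t$ MUBs but now with $\sum_{m,n}a_{nm}>1$, giving $\mathbb{I}_{t-total}>1$ by linearity of the surface calculation, and analogously $\mathbb{I}_{t-total}<1$ for interior states.

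The step I expect to be the main obstacle is the geometric assertion that the $t(N-1)$-dimensional GCP admits the same surface characterization (``at least one $a_{nm}=0$ per MUB'') as the full complementarity polytope. This rests entirely on structural facts already established in the preceding section: each of the $t$ MUBs forms a regular $(N-1)$-simplex in its own orthogonal subspace, distinct MUBs span mutually orthogonal subspaces, and the $tN$ vertices lie on the common outsphere of radius $\sqrt{(N-1)/(2N)}$. Because the key identity $p_{ns}-\tfrac{1}{N}=a_{ns}-A_s/N$ only refers to the $t$ MUBs actually used, no input is needed from the remaining $N+1-t$ bases, and the convex-hull argument of Theorem \ref{mainresult} carries over to the generalized setting without further modification.
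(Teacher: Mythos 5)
Your proposal is, in structure, exactly the paper's proof: the paper proves Theorem~\ref{mainresult2} by declaring it ``exactly the same'' as Theorem~\ref{mainresult}, writing $\rho_{surface}=\sum_{m=1}^{t}\sum_{n=1}^{N}a_{nm}\Pi_{nm}$ with $\sum_{m,n}a_{nm}=1$ and at least one vanishing coefficient per basis, invoking Eq.~(\ref{infofrommub}) to get $\mathbb{I}_s=\sum_n a_{ns}$, and summing over $s=1,\dots,t$. You even supply a detail the paper leaves implicit, namely $p_{ns}=a_{ns}+\frac{1}{N}(1-A_s)$ with $A_s=\sum_n a_{ns}$, hence $p_{ns}-\frac{1}{N}=a_{ns}-\frac{A_s}{N}$. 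You also correctly observe that mutual unbiasedness makes the computation insensitive to the $N+1-t$ discarded bases, which is the only genuinely new point relative to Theorem~\ref{mainresult}.

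However, the step you assert ``goes through unchanged,'' namely that $p_{ns}-\frac{1}{N}=a_{ns}-\frac{A_s}{N}$ yields $\mathbb{I}_s=\sum_n a_{ns}$, is precisely where both your argument and the paper's Eq.~(\ref{infofrommub}) are incomplete. One has $\mathbb{I}_s=\frac{N}{2(N-1)}\sum_n\left|a_{ns}-\frac{A_s}{N}\right|$, and since the signed quantities $a_{ns}-\frac{A_s}{N}$ sum to zero over $n$, this equals $A_s$ only when at most one coefficient $a_{ns}$ in basis $s$ is nonzero; otherwise $\sum_n\left|a_{ns}-\frac{A_s}{N}\right|<\frac{2(N-1)}{N}A_s$ strictly. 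Eq.~(\ref{infofrommub}) distributes the absolute value term by term over $n$, which silently discards these cancellations. The stated surface condition (``at least one $a_{nm}=0$ per MUB'') does not force a single nonzero coefficient per basis once $N\ge 3$: for $N=3$, $t=2$, the state $\rho=\frac{1}{2}\Pi_{1s}+\frac{1}{2}\Pi_{2s}$ satisfies it, lies on an edge of the GCP, yet gives $\mathbb{I}_s=\frac{1}{2}$ and $\mathbb{I}_{t-total}=\frac{1}{2}\neq 1$. So the step would fail as stated; it is valid only on the subset of the boundary spanned by one vertex from each basis (as on the facets of the octahedron for $N=2$), and either the surface parametrization must be restricted accordingly or the identity $\mathbb{I}_s=A_s$ needs a different justification.
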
 
\begin{proof}
	The proof is  exactly same as for Theorem.\ref{mainresult}, but we still write it here for completeness. A state on the surface of a GCP can be written as $\mathcal{\rho}_{surface}=\sum_{m=1}^{t}\sum_{n=1}^{N}a_{nm}\Pi_{nm}$, so that $\sum_{m=1}^{t}\sum_{n=1}^{N}a_{nm}=1$, with atleast one $a_{nm}$ is zero for each MUB $\mathcal{O}_m$. The information extractable from each of the MUB is given by Eq.(\ref{infofrommub}). On adding, we can get the extractable information from $t$ MUBs as 
   \begin{align}\label{ttotalinfo}
	\mathbb{I}_{t-total}=\sum_{s=1}^{t}\mathbb{I}_s=\sum_{s=1}^{t}\sum_{n=1}^{N}a_{ns}=1.
	\end{align}
	Following the proof of Theorem.(\ref{mainresult}), we can also show that for the states inside and outside the GCP, $\mathbb{I}_{t-total}<1$ and $\mathbb{I}_{t-total}>1$ respectively.
\end{proof}

The utility of generalized polytopes can be understood as following. Given an $N^2-1$ dimensional Euclidean space, a complementarity polytope always exists and if it's possible to rotate it so that it becomes a subset of set of density matrices implies the existence of $N+1$ MUBs in $N$ dimensions. In the same way a generalized complementarity always exists in a $t(N-1)$(and hence in $N^2-1$) dimensions, and if it can be arranged such that it becomes a subset of density matrices, implies the existence of $t$ number of MUBs in $N$ dimensions. If such an arrangement is not possible, it implies the non-existence of  $t$ or greater than $t$ number of MUBs. Therefore, we need to handle far lesser number of parameters to prove the non-existence of $N+1$ number of MUBs. 

\begin{exmp}
For the simplest case two-dimensional Hilbert space we can see in Fig.\ref{octasquare} the two possible generalized polytopes. For three dimensional Hilbert space, the smallest GCP occupies a 4 dimensional space, with 6 vertices. In general, the smallest GCP for an $N$ dimensional Hilbert space, occupies $2(N-1)$ dimensions.
\begin{figure}[H]
	\subfigure[]{
		\includegraphics[width=0.24\textwidth, keepaspectratio]{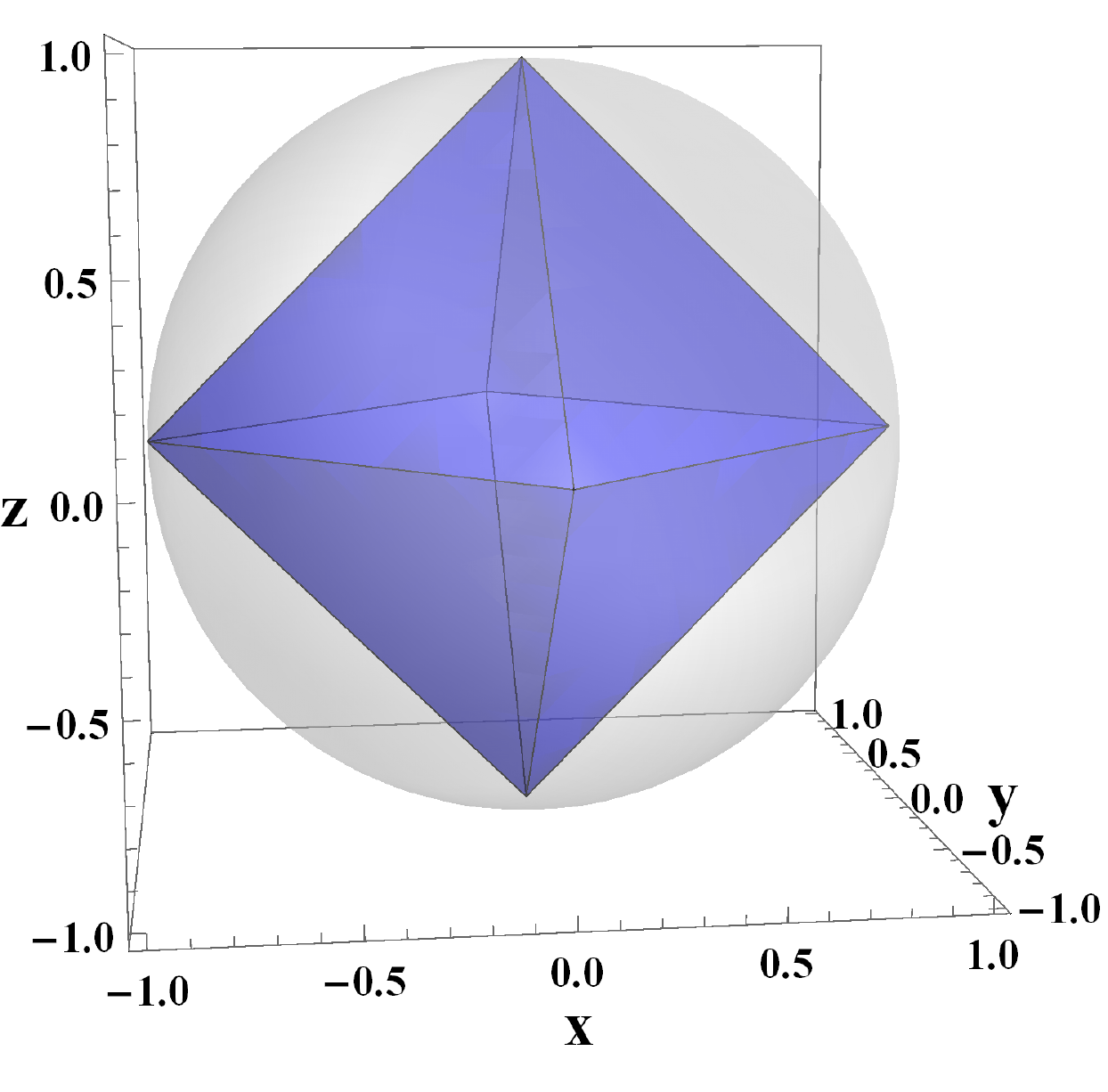}
		
	}
	\subfigure[]{
		\includegraphics[width=0.22\textwidth, keepaspectratio]{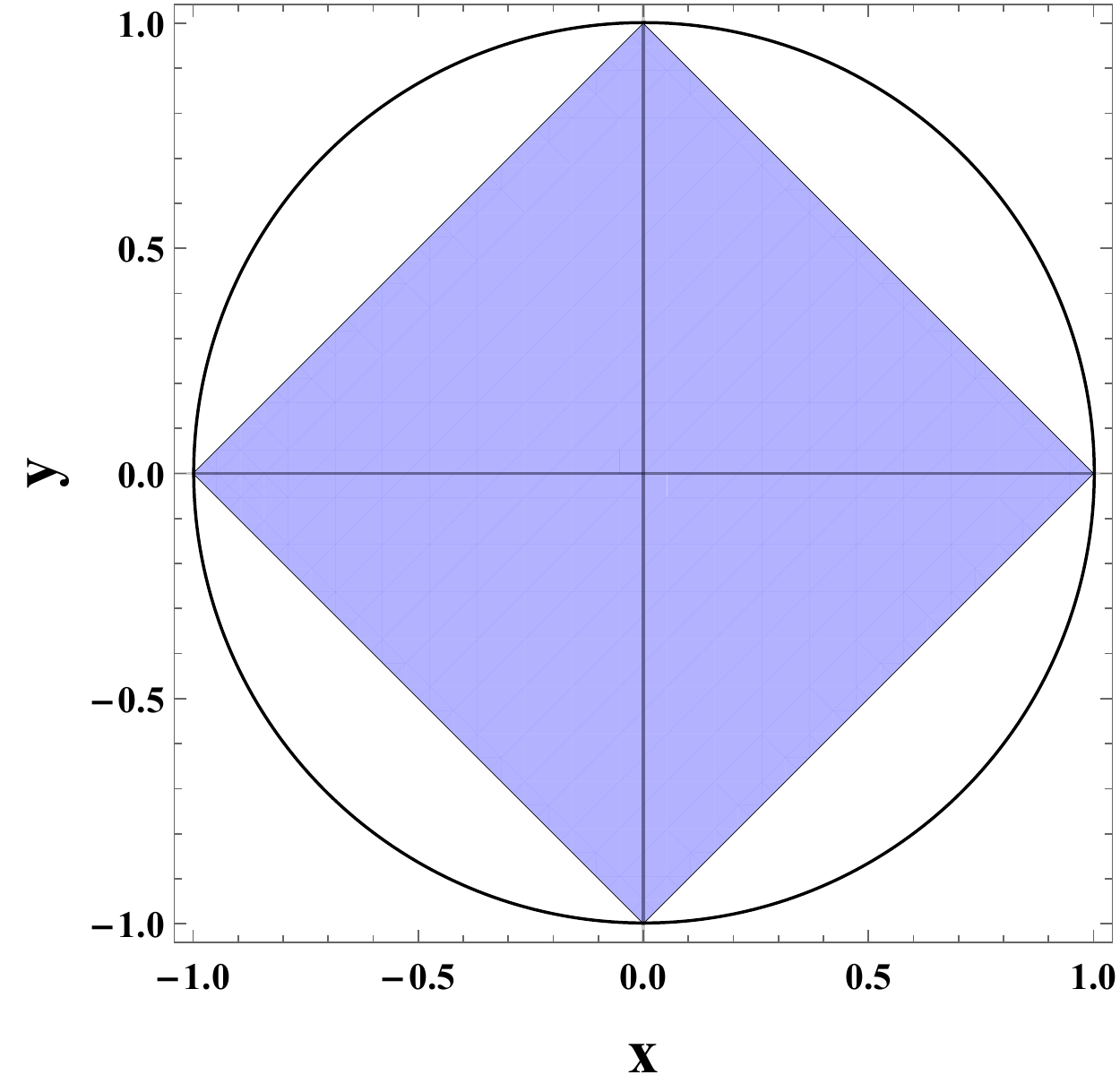}       
		
	}
	\caption{(Color online)Two possible complementarity polytopes for qubit states. We get an octahedron for 3 MUBs and a square for 2 MUBs.}
	\label{octasquare}
\end{figure}
\end{exmp}

\section{Discussion and conclusion}
To summarize the work done here. We introduce the absolute information measure of a probability distribution and calculate the total information extractable from MUBs using this measure. Thereafter, we successfully characterized the complementarity polytope using the total extractable information from MUBs. Further, we demonstrated that $t\leq N+1$ number of MUBs also form a polytope, which we refer to as the generalized complementarity polytope. Moreover, the GCP can also be characterized by the extractable information from $t$ MUBs. With the GCP, it should be a lot easier to check whether it can be made to lie within the convex set of density matrices and whether $t$ MUBs exist for a given Hilbert space.

So far, we couldn't recognize how the total extractable information could help us to study the existence or non-existence of $N+1$ number of MUBs in $N$ dimensional Hilbert space. We hope it leads to better understanding of the geometry of quantum states and MUBs. To look for the existence of $t\leq N+1$ should be a much easier task than to look for the existence of $N+1$ MUBs. GCPs can be made use of in this direction. One way to approach it could be to look for the ratios of the volume of GCP formed with $t$ MUBs and the volume of all density matrices in the $t(N-1)$ dimensional subspace. Finding the volume of a GCP is a straightforward task, but the volume of all density matrices in $t(N-1)$ dimensions is a complicated task, so we have left it for future work.  

\bibliographystyle{apsrev4-1}
\bibliography{compolyref}
\end{document}